\newcommand{\scope}[1]{\mathbf{x}^{#1}}
\newcommand{\setf}[1]{{\bf{#1}}}
\newcommand{\argmax}{\operatornamewithlimits{argmax}}
\def\is{\!=\!}
\newtheorem{theorem}{{\bf Theorem}}
\newcommand{\citet}[1]{\citeauthor{#1} [\citeyear{#1}]}
\newcommand{\bitemize}{\begin{list}{$\bullet$}{\topsep=1pt \parsep=0pt \itemsep=1pt \leftmargin=1em }} 
\newcommand{\eitemize}{\end{list}}
\newcommand{\beitemize}{\begin{list}{$\bullet$}{\topsep=1.5pt \parsep=0pt \itemsep=1pt \leftmargin=1em }} 
\newcommand{\enitemize}{\end{list}}
\theoremstyle{definition}
\title{New Algorithms for Functional Distributed Constraint Optimization Problems}
\author{
Khoi D. Hoang$^1$
\and
William Yeoh$^1$\and
Makoto Yokoo$^{2}$\And
Zinovi Rabinovich$^3$
\affiliations
$^1$Washington University in St. Louis, USA\\
$^2$Kyushu University, Japan\\
$^3$Nanyang Technological University, Singapore
\emails
\{khoi.hoang, wyeoh\}@wustl,edu,
yokoo@inf.kyushu-u.ac.jp,
zinovi@ntu.edu.sg
}
\begin{document}

\maketitle

\begin{abstract}
\small
The \emph{Distributed Constraint Optimization Problem} (DCOP) formulation is a powerful tool to model multi-agent coordination problems that are distributed by nature. The formulation is suitable for problems where variables are discrete and constraint utilities are represented in tabular form. However, many real-world applications have variables that are continuous and tabular forms thus cannot accurately represent constraint utilities. To overcome this limitation, researchers have proposed the \emph{Functional DCOP} (F-DCOP) model, which are DCOPs with continuous variables. But existing approaches usually come with some restrictions on the form of constraint utilities and are without quality guarantees. Therefore, in this paper, we (\emph{i}) propose exact algorithms to solve a specific subclass of F-DCOPs; (\emph{ii}) propose approximation methods with quality guarantees to solve general F-DCOPs; and (\emph{iii}) empirically show that our algorithms outperform existing state-of-the-art F-DCOP algorithms on randomly generated instances when given the same communication limitations.
\end{abstract}

\maketitle

%!TEX root=./F-DCOP.tex

%--------------------------------------------------------------------------
%	INTRODUCTION
%--------------------------------------------------------------------------
\section{Introduction}
\label{sec:introduction}
The \emph{Distributed Constraint Optimization Problem} (DCOP)~\cite{modi:05,petcu:05} formulation is a powerful tool to model cooperative multi-agent problems.
DCOPs are well-suited to model many problems that are distributed by nature and where agents need to coordinate their value assignments to maximize the aggregate constraint utilities. 
This model is widely employed to model distributed problems such as meeting scheduling problems~\cite{maheswaran:04a}, sensor and wireless networks~\cite{farinelli:08,yeoh:12}, multi-robot teams coordination~\cite{zivan:15}, smart grids~\cite{kumar:09,miller:12,fioretto:17b}, coalition structure generation~\cite{ueda:10} and smart homes~\cite{rust:16,fioretto:17a}.

However, the regular DCOP model assumes that the variables are discrete and the constraint utilities are represented in tabular form (i.e.,~a utility is defined for every combination of discrete values of variables). While these assumptions are reasonable in some applications where values of variables correspond to a set of \emph{discrete} possibilities (e.g.,~the set of tasks that robots can perform in multi-robot coordination problems or the set of coalitions that agents can join in coalition structure generation problems), they make less sense in applications where values of variables correspond to a \emph{continuous} range of possibilities (e.g.,~the range of orientations a sensor can take in sensor networks or the range of frequencies an agent can choose in wireless networks). 

These limiting assumptions have prompted \citet{stranders:09a} to extend the DCOP formulation to allow for continuous variables. We refer to this extension as \emph{Functional DCOPs} (F-DCOPs) in this paper.\footnote{As \citet{stranders:09a} did not name their extension in their paper, we choose a name so that we can refer to it easily.} Additionally, as variables can now take values from a continuous range, constraint utilities are also similarly extended from tabular forms to functional forms in F-DCOPs. To solve such problems, \citet{stranders:09a} extended the discrete \emph{Max-Sum} (MS) algorithm~\cite{farinelli:08} to \emph{Continuous MS} (CMS), where constraint utility functions are approximated by piecewise linear functions. \citet{voice:10} later proposed \emph{Hybrid CMS} (HCMS), which combines the discrete MS algorithm with continuous non-linear optimization methods. Specifically, agents in HCMS approximate the utility functions with a number of samples that they iteratively improve over time. A key limitation of CMS and HCMS is that they both do not provide quality guarantees on the solutions found. The reason for this is that they rely on discrete MS as the underlying algorithmic framework, which do not provide quality guarantees on general graphs. 

To overcome this limitation, we extend the \emph{Distributed Pseudo-tree Optimization Procedure} (DPOP)~\cite{petcu:05} algorithm to three extensions -- \emph{Exact Functional DPOP} (EF-DPOP); \emph{Approximate Functional DPOP} (AF-DPOP); and \emph{Clustered AF-DPOP} (CAF-DPOP). EF-DPOP provides an exact approach to solve F-DCOPs with linear or quadratic utility functions and are defined over tree-structure graphs. Both AF-DPOP and CAF-DPOP solve F-DCOPs approximately without any restriction on the type of utility functions or graph structure. We also provide theoretical properties on the error bounds and communication complexities of AF-DPOP and CAF-DPOP and show that they outperform HCMS in randomly generated instances when given the same communication limitations. 

%!TEX root=./F-DCOP.tex

%--------------------------------------------------------------------------
%	BACKGROUND \ DCOPs
%--------------------------------------------------------------------------
\section{Background}
\label{sec:background}

%We now provide background on the regular DCOP model and DPOP which is a popular complete inference DCOP algorithm.

\paragraph{DCOPs:}

\noindent A \emph{Distributed Constraint Optimization Problem} (\emph{DCOP}) is a tuple  
$\langle \setf A, \setf X, \setf D, \setf F, \alpha\rangle$: $\setf{A} = \{a_i\}_{i=1}^p$ is a set of \emph{agents}; $\setf{X} = \{x_i\}_{i=1}^n$ is a set of decision \emph{variables}; $\setf{D} = \{D_x\}_{x \in \setf{X}}$ is a set of finite \emph{domains} and each variable $x \in \setf{X}$ takes values from the set $D_{x}$; $\setf{F} = \{f_i\}_{i=1}^m$ is a set of \emph{utility functions}, each defined over a set of decision variables: $ f_i : \prod_{x \in \scope{f_i}} D_x \to \mathbb{R} \cup \{-\infty\}$, where infeasible configurations have $-\infty$ utilities, $\scope{f_i} \subseteq \setf{X}$ is the \emph{scope} of $f_i$, and $\alpha : \setf{X} \to \setf{A}$ is a \emph{mapping function} that associates each decision variable to one agent. 

A \emph{solution} $\sigma$ is a value assignment for a set $\setf{x}_{\sigma} \subseteq \setf{X}$ of variables that is consistent with their respective domains. The utility $\setf{F}(\sigma) = \sum_{f \in \setf{F}, \scope{f} \subseteq \setf{x}_{\sigma}} f(\sigma)$ is the sum of the utilities across all the applicable utility functions in $\sigma$. A solution $\sigma$ is \emph{complete} if $\setf{x}_{\sigma} \is \setf{X}$. The goal is to find an optimal complete solution $\setf{x}^* = \argmax_{\setf{x}} \setf{F}(\setf{x})$.

A \emph{constraint graph} visualizes a DCOP, where nodes in the graph correspond to variables in the DCOP and edges connect pairs of variables appearing in the same utility function. A \emph{pseudo-tree} arrangement has the same nodes and edges as the constraint graph and satisfies that (\emph{i}) there is a subset of edges, called \emph{tree edges}, that form a rooted tree and (\emph{ii}) two variables in a utility function appear in the same branch of that tree. The other edges are called \emph{backedges}. Tree edges connect parent-child nodes, while backedges connect a node with its \emph{pseudo-parents} and its \emph{pseudo-children}. %A pseudo-tree arrangement can be constructed using a distributed DFS algorithm~\cite{hamadi:98}. %In this article, we will use $N_i$ to refer to the set of neighbors of variable $x_i$ in the constraint graph, $C_i$ to refer to the set of children of variable $x_i$ in the pseudo-tree, $P_i$ to refer to the parent of variable $x_i$ in the pseudo-tree, and $PP_i$ to refer to the set of pseudo-parents of variable $x_i$ in the pseudo-tree.

%Given a DCOP $P$,  $G = (\setf{X}, E)$ is the \emph{constraint graph} of $P$, where $\{x,y\} \in E$ iff $\exists f_i \in \setf{F}$ such that $\{x,y\} = \scope{f_i}$. A \emph{DFS pseudo-tree} arrangement for $G$ is a \emph{spanning tree} $T = \langle \setf{X}, E_T\rangle$ of $G$ such that if $f_i \in \setf{F}$ and $\{x,y\} \subseteq \scope{f_i}$, then $x$ and $y$ appear in the same branch of $T$.

In this paper, we assume that each agent controls exactly one decision variable and thus use the terms ``agent'' and ``variable'' interchangeably. We also assume that all utility functions are binary functions between two variables. %Relaxing this assumption to higher-arity functions should be straightforward, though it may affect the empirical properties of the algorithms proposed. 

\paragraph{DPOP:}

\emph{Distributed Pseudo-tree Optimization Procedure (DPOP)}~\cite{petcu:05} is a complete \emph{inference algorithm} that is composed of three phases: %The DPOP pseudocode is presented in Algorithm \ref{alg:dpop}:
\bitemize
\item \emph{Pseudo-tree Generation:} In this phase, all agents start building a pseudo-tree~\cite{hamadi:98}. 
\item \emph{UTIL Propagation:} Each agent, starting from the leaves of the pseudo-tree, adds the optimal sum of utilities in its subtree for each value combination of variables in its separator.\footnote{\small The separator of $x_i$ contains all ancestors of $x_i$ in the pseudo-tree that are connected to $x_i$ or to one of its descendants.} It does so by \emph{adding} the utilities of its functions with the variables in its separator and the utilities in the UTIL messages received from its children. The agent then \emph{projects} out its variable by optimizing over it and sends the projected function in a UTIL message to its parent.
\item \emph{VALUE Propagation:} Each agent, starting from the root of the pseudo-tree, determines the optimal value for its variable and then sends the optimal value as well as the optimal values of agents in its separator to its children. The root agent does so by choosing the values of its variables from its UTIL computations, and send them as VALUE messages.
\eitemize

\section{Functional DCOP Model}
\label{sec:model}

The \emph{Functional DCOP} (F-DCOP) model generalizes the regular discrete DCOP model by modeling the  variables as continuous decision variables~\cite{stranders:09a}. More formally, an F-DCOP is a tuple $\langle \setf A, \setf X, \setf D, \setf F, \alpha\rangle$, where $\setf{A}$, $\setf{F}$, and $\alpha$ are exactly as defined in DCOPs. The key differences are as follows: 
\beitemize
\item $\setf{X} = \{x_i\}_{i=1}^n$ is now a set of \emph{continuous} decision variables controlled by the agents.

\item $\setf{D} = \{D_x\}_{x \in \setf{X}}$ is now a set of \emph{continuous domains} of the decision variables. Each variable $x \in \setf{X}$ takes values from the interval $D_{x} = [LB_x, UB_x]$. 

%\item $\setf{F} = \{f_i\}_{i=1}^m$ is a set of \emph{utility functions}, each defined over a set of variables: $ f_i : \prod_{x \in \scope{f_i}} D_x \to \mathbb{R} \cup \{-\infty\}$, where infeasible value combinations for the variables in $\scope{f_i}$ have $-\infty$ utilities.

%\item $\alpha : \setf{X} \to \setf{A}$ is a \emph{mapping function} that associates each decision variable to one agent.
\enitemize
The objective of an F-DCOP is the same as that in DCOPs -- to find an optimal complete solution $\setf{x}^* = \argmax_{\setf{x}} \setf{F}(\setf{x})$.

%A \emph{solution} $\sigma$ is a value assignment for a set $\setf{x}_{\sigma} \!\subseteq\! \setf{X}$ of variables that is consistent with their respective domains. The utility $\setf{F}(\sigma) \!=\! \sum_{f \in \setf{F}, \scope{f} \subseteq \setf{x}_{\sigma}} f(\sigma)$ is the sum of the utilities across all the applicable utility functions in $\sigma$. A solution $\sigma$ is \emph{complete} if $\setf{x}_{\sigma} \is \setf{X}$.
%The goal of F-DCOP is 

%!TEX root=./F-DCOP.tex

%--------------------------------------------------------------------------
%	ALGORITHM
%--------------------------------------------------------------------------
\section{F-DCOP Algorithms}
\label{sec:algorithm}

We now introduce three F-DCOP algorithms: \emph{Exact Functional DPOP} (EF-DPOP), \emph{Approximate Functional DPOP} (AF-DPOP), and \emph{Clustered AF-DPOP} (CAF-DPOP). All three algorithms are based on the framework of DPOP, where they  extend the capability of DPOP such that they can solve F-DCOPs with continuous variables and utility functions.

%We can apply the following DPOP like algorithm. 
%\begin{description}\itemsep0pt\parsep0pt
%	\item[Propagation:] From the leaf node to the root node, each node eliminates its variable and sends the partition and
%	corresponding functions to its parent node. 
%	\item[Value assignment:]
%	The root node finds the optimal value $a_r$ for $x_r$ such that
%	$r(x_r) + f_1(x_r) + \ldots + f_{\ell}(x_r) + \ldots$ is maximized. % among all combined, fine-grained partition of $[s_r, e_r]$, 
%	and sends $a_r$ to its children. 
%	Then, for given $a_r$, $x_r$'s direct child $x_i$ determines the value of $x_i$ such that $r(x_i, x_r) + f_1(x_i, x_r) + \ldots + f_{\ell}(x_i, x_r) + \ldots$ is
%	maximized for the corresponding partition where $a_r$ belongs. 
%	Furthermore, for given $a_{i_1}, \ldots, a_{i_k}$, each node $x_i$ determines the value of $x_i$ such that 
%	$r(x_i, x_{i_1}, \ldots, x_{i_k}) + f_1(x_i, x_{i_1}, \ldots, x_{i_k}) + \ldots + f_{\ell}(x_i, x_{i_1}, \ldots, x_{i_k}) + \ldots$ 
%	is maximized for the corresponding partition where $a_{i_1}, \ldots, a_{i_k}$ belongs. 
%	Finally, for given $a_{i_1}, \ldots, a_{i_k}$, each leaf node $x_i$ determines the value of $x_i$ such that $r(x_i, x_{i_1}, \ldots, x_{i_k})$ is maximized 
%	for the corresponding partition where $a_{i_1}, \ldots, a_{i_k}$ belongs. 
%\end{description}

\subsection{Exact Functional DPOP}

\emph{Exact Functional DPOP} (EF-DPOP) is an exact algorithm for F-DCOPs with linear or quadratic utility functions and are defined over tree-structure graphs. It extends the two primary operations of DPOP in the UTIL propagation phase -- \emph{addition} and \emph{projection}.
%Those modification are modified such that they can be applied in F-DCOPs in the context of continuous variables and real-valued functions.

%In the UTIL propagation phase of DPOP, each agent \emph{adds} the utilities in UTIL messages received from its children together with the utilities of constraints that the agent shares with the agents in its separator. Then, it \emph{projects} out its own variable by optimizing over it and sends the projected utilities as a UTIL message to its parent. Both of these processes are straightforward as utility functions are represented in tabular form, thereby allowing the agents to enumerate through all possible value combinations, aggregate their corresponding utilities, and optimize over them. However, this process is more complicated in F-DCOPs, where utility functions are represented in functional form. We now describe the addition and projection operations for EF-DPOP below.

%In the regular DCOP, the constraints are presented as utility tables where they enumerate all the possible value combinations and their corresponding utilities.
%Thus, agents in DPOP add up the constraints by joining the utility tables together.
%However, in F-DCOPs, constraints are represented by real-valued functions and the projection process will later result in piecewise functions.
%Thus, agents in \textsc{A-FDPOP} add up the constraints with UTIL messages as they are adding piecewise  functions.

\paragraph{Addition Operation:}
In EF-DPOP, each UTIL message contains a piecewise function %that is derived from the projection operation (described below).
and the addition of two piecewise functions is done by adding their sub-functions that may have different domains. We will use the following two functions to illustrate our operations:
\begin{small}\begin{align}
f_{12}(x_1, x_2) = 
\begin{cases}
f_{12}^a \qquad \text{if } x_1 \in [0,4], x_2 \in [0,6] \\
f_{12}^b \qquad \text{if }  x_1 \in [0,4], x_2 \in [6,10] \\
f_{12}^c \qquad \text{if }   x_1 \in [4,10], x_2 \in [0,6] \\
f_{12}^d \qquad \text{if }  x_1 \in [4,10], x_2 \in [6,10] \\
\end{cases} \\
f_{23}(x_2, x_3) = 
\begin{cases}
f_{23}^a \qquad \text{if } x_2 \in [0,3], x_3 \in [0,7] \\
f_{23}^b \qquad \text{if } x_2 \in [0,3], x_3 \in [7,10] \\
f_{23}^c \qquad \text{if } x_2 \in [3,10], x_3 \in [0,7] \\
f_{23}^d \qquad \text{if } x_2 \in [3,10], x_3 \in [7,10] \\
\end{cases}
\end{align}\end{small}

When adding two piecewise functions, we first identify the common variable between the two functions and create a new set of atomic ranges for the variable. For example, when adding the functions $f_{12}$ and $f_{23}$ above, the common variable is $x_2$, and the new ranges for $x_2$ are $[0, 3]$, $[3, 6]$, and $[6, 10]$. The ranges of the other variables remain unchanged from their original functions. 

We then take the Cartesian product of the range sets of all common variables and associate the appropriate function to that range. For example, the addition of $f_{12}$ and $f_{23}$ will be a new function $f_{123}$:
\begin{small}\begin{align}
f&_{123}(x_1, x_2, x_3) \notag \\ &=
\begin{cases}
f_{12}^a + f_{23}^a \quad \text{if } x_1 \in [0,4], x_2 \in [0,3], x_3 \in [0,7] \\
f_{12}^a + f_{23}^b \quad \text{if } x_1 \in [0,4], x_2 \in [0,3], x_3 \in [7,10] \\
f_{12}^c + f_{23}^a \quad \text{if } x_1 \in [4,10], x_2 \in [0,3], x_3 \in [0,7] \\
f_{12}^c + f_{23}^b \quad \text{if } x_1 \in [4,10], x_2 \in [0,3], x_3 \in [7,10] \\
\ldots
\end{cases}
\end{align}\end{small}

%For each common variable, we create the atomic range set from the lower bound, intersecting points of the variable's ranges from two piecewise functions and the upper bound.
%Here, we assume that lower and upper bounds of a given variable are the same across utility functions. From the Cartesian product of the atomic range sets of all common variables, we have the set of common domains where each of the common domain is totally covered by the domain of any sub-function.
%At this point, adding two sub-functions will result in a function that has a scope as the union of the scopes of two piecewise-defined functions.
%The domain of this function now consists of two parts: the common domains and the domains of the non-common variables from the two sub-functions.

\vspace{-1.5em}
\paragraph{Projection Operation:}
Projecting out a variable $x_i$ from a function $f(x_i, x_{i_1},\dots, x_{i_k})$ means finding the piecewise function:
\begin{small}\begin{align}
g(x_{i_1},\dots, x_{i_k}) = \argmax_{x_i} f(x_i, x_{i_1},\dots, x_{i_k}) \label{argmax-multi}
\end{align}\end{small}
To find $g$, we solve the following for closed-form solutions:
\begin{small}\begin{align}
\frac{\partial f(x_i, x_{i_1},\dots, x_{i_k})}{\partial x_i} &= 0 \label{partial-multi} 
\end{align}\end{small}
Let $\bar{x}_i$ be the solution to the equation above. Then:
\begin{small}\begin{align}
\bar{x}_i  &= g'(x_{i_1},\dots, x_{i_k}) \\ \label{root-partial-multi}
\bar{g}(x_{i_1},\dots, x_{i_k}) &= f(x_i = \bar{x}_i, x_{i_1},\dots, x_{i_k})
\end{align}\end{small}
Aside from $\bar{g}$, there are two other candidate functions:
\begin{small}\begin{align}
\check{g} &= f(x_i = LB_{x_i}, x_{i_1},\dots, x_{i_k}) \\
\hat{g} &= f(x_i = UB_{x_i}, x_{i_1},\dots, x_{i_k})
\end{align}\end{small}
Next, we need to find the intervals where each of the functions $\bar{g}, \check{g}$ and $\hat{g}$ is the largest. Those intervals are the intersections between the three functions and, thus, we solve each of the equations below to find them:
\begin{small}\begin{align}
\check{g} (x_{i_1},\dots, x_{i_k}) &= \hat{g} (x_{i_1},\dots, x_{i_k}) \\
\check{g} (x_{i_1},\dots, x_{i_k}) &= \bar{g} (x_{i_1},\dots, x_{i_k})\\
\hat{g} (x_{i_1},\dots, x_{i_k}) &= \bar{g} (x_{i_1},\dots, x_{i_k})
\end{align}\end{small}
The result of this process is a set of intervals where either $\bar{g}$, $\check{g}$, or $\hat{g}$ is the largest.
The projected function $g$ is the piecewise function that consists of $\bar{g}$, $\check{g}$, or $\hat{g}$ with the intervals that they are the largest in.

Unfortunately, it is not always possible to find closed-form solutions to the partial derivative in Equation~\eqref{partial-multi}. We discuss below two types of functions -- binary linear and quadratic functions -- where it is possible to find closed-form solutions. 
%In this section, we will discuss some forms of functions where we can solve the partial derivative equation exactly. All of the utility functions are assumed to be binary. In other words, given the utility function $f(x_i, x_{i_1})$, the projection operation will result in the piecewise function $g(x_{i_1})$:
%\begin{small}\begin{align}
%g(x_{i_1}) = \max_{x_i} f(x_i, x_{i_1})
%\end{align}\end{small}
\beitemize
\item Binary linear functions of the form $f(x_i, x_{i_1}) = ax_{i} + bx_{i_1} + c$.
By following the monotonicity property of linear functions, we can find $g(x_{i_1}) = \argmax_{x_i} f(x_i, x_{i_1})$ at the two extremes:
\begin{small}\begin{align}
g(x_{i_1}) = 
\begin{cases} 
f(x_i = LB_{x_i}, x_{i_1}) &\text{if} \quad a > 0 \\
f(x_i = UB_{x_i}, x_{i_1}) &\text{otherwise}
\end{cases}
\end{align}\end{small}

\item Binary quadratic functions of the form $f(x_i, x_{i_1}) = a x_i^2 + bx_i + cx_{i_1}^2 + dx_{i_1} + ex_ix_{i_1} + f$. We first take the partial derivative and setting it to 0 to find the critical point:
\begin{small}\begin{align}
\frac{\partial f}{\partial x_i} = 
2ax_i + b + ex_{i_1}  &= 0\\
\bar{x}_{i} = \frac{-b - ex_{i_1}}{2a}
\end{align}\end{small}
As $\bar{x}_{i} $ has to belong to the interval $[LB_{x_i}, UB_{x_i}]$, we solve the inequalities below to find the range $x_{i_1}$ as the domain of $\bar{g}(x_{i_1})$:
\begin{small}\begin{align}
LB_{x_i} \le \frac{-b - ex_{i_1}}{2a} \le UB_{xi}
\end{align}\end{small}
%By comparing the three functions $\bar{g}, \check{g}$ and $\hat{g}$, we can find the piecewise function as a result of the projection operation.
\enitemize

\subsection{Approximate Functional DPOP}

In general F-DCOPs, Eq.~\eqref{partial-multi} may be a multivariate equation, and it is not always possible to find a closed-form solution to such functions.
Therefore, an approximation approach is desired for F-DCOPs.

In this section, we introduce \emph{Approximate Functional DPOP} (AF-DPOP), which is an approximation algorithm that can solve F-DCOPs without any restriction on the functional form of the constraint utilities. AF-DPOP is similar to DPOP in that the algorithm has the same three phases: pseudo-tree generation, UTIL propagation, and VALUE propagation phases. The pseudo-tree generation phase is identical to that of DPOP, and the UTIL and VALUE propagation phases share some similarities. 

We now describe how these two propagation phases work at a high level. In the UTIL propagation phase, like DPOP, agents in AF-DPOP also discretizes the domains of variables and sends up UTIL tables that contain utilities for each value combination of values of separator agents. However, unlike DPOP, agents in AF-DPOP perform local optimization of these values by ``moving'' them along the gradients of relevant utility functions in order to improve the overall solution quality. As such, the addition and projection operators have to be updated as well.

In the VALUE propagation phase, like DPOP, agents in AF-DPOP also sends down their best value down to their children in the pseudo-tree. However, unlike DPOP, agents in AF-DPOP may receive values of ancestors that do not map to computed utilities. As such, the agents must perform local interpolation of the utilities value in this phase.

We now describe the algorithm in more detail, where we focus on the UTIL and VALUE propagation phases of the algorithm. %Algorithm~\ref{alg:moving-dpop} shows the pseudocode of AF-DPOP, where the procedures AF-UTIL and AF-VALUE correspond to the UTIL and VALUE propagation phases, respectively.

%Similar to DPOP, the AF-DPOP algorithm has three phases: pseudo-tree generation, UTIL propagation, and VALUE propagation phases. While the pseudo-tree generation phase is the same as DPOP, AF-DPOP extends the UTIL and VALUE phases to \emph{AF-UTIL} (line \ref{moving-util}) and \emph{AF-VALUE} (line~\ref{moving-value}), respectively. 

\paragraph{UTIL Propagation:} In this phase, each leaf agent first discretizes the domains of the agents in its separator (i.e.,~its parent and pseudo-parents) and then stores the Cartesian product of these discrete values in the set $V$. Therefore, each element $v \in V$ is a tuple $\langle v_{i_1}, \ldots, v_{i_k}\rangle$, where each value $v_{i_j}$ is the value of separator agent $a_{i_j}$. 

Then, for each tuple $v \in V$, the agent ``moves'' each value $v_{i_j}$ in the tuple along the gradient of each function that is relevant to agent $a_{i_j}$. Specifically, the agent updates value $v_{i_j}$ according to the following equation for each separator agent $x_{i_j}$ of the leaf agent $x_i$:
\begin{align}
v_{i_j} = v_{i_j} + \alpha \left.\frac{\partial f_{i_j}(x_i, x_{i_j})}{\partial x_{i_j}} \right\vert^{v_{i_j}}_{\argmax_{x_i} f_{i_j}(x_{i_j} = v_{i_j})}
\label{eq:leaf_move}
\end{align}
where $f_{i_j}(x_i, x_{i_j})$ is the utility function between the leaf agent $x_i$ and the separator agent $x_{i_j}$ and $\alpha$ is the \emph{learning rate} of the algorithm. The agent can ``move'' the values as many times as it like until they have either converged or a maximum number of iterations is reached. The agent can ``move'' the values as many times as desired until they have either converged or a maximum number of iterations is reached. Then, the updated values in $V$ and their corresponding utilities define the UTIL table that is sent to the parent of the agent in a UTIL message.

As in DPOP, each non-leaf agent will first wait for the UTIL messages from each of its children. When all the UTIL messages are received, the agent processes the UTIL tables in the UTIL message from each child. Note that in regular DPOP, the Cartesian product of the values of agents are consistent across the UTIL tables of all children (i.e., if the values of an agent $a$ exists in the Cartesian products of two children, then those values are identical). The reason is because all agents agree on the discretization of the domain of agent $a$ and do not update the value of that agent (such as through Eq.~\eqref{eq:leaf_move}). Therefore, each agent can easily add up the utilities in the UTIL tables received together with the utilities of constraints between the agent and its separator.

In contrast, since the values of agents are updated according to Eq.~\eqref{eq:leaf_move} in AF-DPOP, these values may no longer be consistent across different UTIL tables received. To remedy this issue, each agent first adds additional tuples to each UTIL table received such that the Cartesian product of the values of agents are consistent across all the UTIL tables. Then, it approximates the utilities of the newly added tuples by interpolating between the utilities of the existing tuples. Finally, since the UTIL tables are now all consistent, the agent adds up the utilities in the UTIL tables of children together with the utilities of constraints between the agent and its separator in the same way as DPOP.

After the utilities are added up, similar to leaf agents, the agent $x_i$ will proceed to repeatedly update the values $v_{i_j}$ of the separator $a_{i_j}$ in the updated Cartesian product $V$ using:
\begin{align}
v_{i_j} = v_{i_j} + \alpha \left.\frac{\partial f_{i_j}(x_i, x_{i_j})}{\partial x_{i_j}} \right\vert^{v_{i_j}}_{\argmax_{x_i} UTIL_i(v_{i_1}, \ldots, v_{i_k})}
\label{eq:nonleaf_move}
\end{align}
where $UTIL_i$ is the utility table that is constructed from the summation of the children's utilities and the utilities of constraints between the agent $x_i$ with its separator set.
The key difference between this Eq.~\eqref{eq:nonleaf_move} and the Eq.~\eqref{eq:leaf_move} used by leaf agents is that the substitution of $f_{i_j}(x_{i_j} = v_{i_j})$ with $UTIL_i(v_{i_1}, \ldots, v_{i_k})$. 

The reason for this substitution is that the utilities in the UTIL tables of leaf agents are only a function of constraints with their separator agents and the functional form of those constraints are known. Therefore, leaf agents can optimize exactly those functions to get accurate gradients. In contrast, utilities in the UTIL tables of non-leaf agents are also a function of the constraints between its descendant agents and its separator agent, and the functional form of those constraints are not known. They are only represented by samples within the UTIL tables received and are now integrated into the UTIL table of the non-leaf agent. Therefore, in Eq.~\eqref{eq:nonleaf_move}, the agent approximates its maximum value $x_i$ by choosing the best value of under the assumption that the values of the other separator agents are exactly the same as in the tuple $\langle v_{i_1}, \ldots, v_{i_k} \rangle$ that is being updated. 

After these values are all updated, the agent approximates their corresponding utilities by interpolating between known utilities and sends these utilities up to its parent in a UTIL message. These UTIL messages propagate up to the root agent, which then starts the VALUE phase.

\paragraph{VALUE Propagation:} The root agent starts this phase after processing all the UTIL messages received from its children in the UTIL phase. It chooses its best value based on its computed UTIL table and sends this value down to its children. Like in DPOP, each agent will repeat the same process after receiving the values of its parent and pseudo-parents. 

However, unlike DPOP, an agent may receive the information that its parent or pseudo-parent is taking on a value that doesn't correspond to an existing value in the agent's UTIL table due to the values being moved during the UTIL propagation phase. As a result, the agent will need to approximate the utility for this new value received and it does so by interpolating between known utilities in its UTIL table. 

Once all the leaf agents receive VALUE messages from their parents and choose their best values, the algorithm terminates.

%In this phase, starting from the root agent, the agents choose its optimal value based on their UTIL tables and the values of its separator agents. These values are sent down to each child in the pseudo-tree in VALUE messages. 
%
%After an agent receives a VALUE message from its parent, 
%which contains values of agents in the separator, from the parent, the non-leaf agent interpolate the utility for this value tuple from the agent view table $T_i$ and then send the optimal values to the children. 
%The leaf agent adds up all the constraint functions which are now unary, and choose the maximum value of the function as its optimal value (line \ref{moving-value-begin}-\ref{moving-value-end}).
	
\subsection{Clustered Approximate Functional DPOP}

A possible limitation of AF-DPOP is that the number of tuples in the Cartesian product $V$ that is propagated in the UTIL messages can be quite large, especially if additional tuples are added to maintain consistency between the UTIL tables of children. In communication-constrained applications, it is preferred that the number and size of messages transmitted between agents to be as small as possible.

With this motivation in mind, we extend AF-DPOP to \emph{Clustered AF-DPOP} (CAF-DPOP), which bounds the number of tuples sent in UTIL messages to limit the message size. CAF-DPOP is identical to AF-DPOP in every way except that agents choose $k$ representative tuples and their corresponding utilities to be sent up to their parents in UTIL messages. To choose these $k$ representative tuples, we use the $k$-means clustering algorithm~\cite{macqueen:1967} to cluster the tuples and then approximate the utilities of those tuples through interpolation. This approach assumes that tuples that are close to each other will have similar values. 

Note that while only $k$ tuples are sent between agents in UTIL messages, each agent still maintains the original unclustered set of tuples in their memory. Thus, when they perform interpolation during the VALUE propagation phase, they will use the utilities of the unclustered set of tuples since they are more accurate than the utilities of the clustered set of tuples.

\section{Theoretical Properties}

%%----------------------------------------------------------------------------------------
%%
%%----------------------------------------------------------------------------------------
\begin{table*}[t] \small
	\begin{center}
		\small \centering
		\scalebox{1}
		{%
			\begin{tabular}{r | r | r | rrrr | r |}
				\cline{2-8}	
				\multicolumn{1}{c|}{\multirow{2}{*}{$|\setf{A}|$}} 
				& \multicolumn{1}{c|}{\multirow{2}{*}{HCMS}} & \multicolumn{1}{c|}{\multirow{2}{*}{DPOP}} & \multicolumn{4}{c|}{AF-DPOP} & \multicolumn{1}{c|}{\multirow{2}{*}{EF-DPOP}} \\
%				\cline{4-8}	
				& & & 5 & 10 & 15 & 20 &\\
				%				\cline{1-13}	
				\cline{2-8}	
				%				\hline
				10 & 129k & 220k & 330k & 356k & 374k & 404k & 518k \\
%				15 & 271k  & 56  & 642k & 696k & 755k & 806k &
%									- &	438k  & 28 \\
				20 & 306k & 541k & 795k &  870k & 947k & 1008k &  --- \\  		
%				25 & 371k  & 96  & 919k & 1011k & 1088k & 1157k &
%									---  & 639k  & 48   \\
				30 & 436k & 766k  & 1128k & 1230k & 1331k & 1414k & --- \\
%				35 & 524k  & 136 & 1319k & 1428k & 1537k & 1645k &
%									---  & 913k  & 68  \\
				40 & 636k & 1104k & 1587k & 1728k & 1876k & 1980k  & 	--- \\
%				45 & 787k  & 176 & 1787k & 1950k & 2107k & 2240k &
%								    ---  & 1249k  & 88  \\
				50 & 832k & 1456k & 2109k & 2316k & 2533k & 2687k & --- \\
				\cline{2-8} 
			\end{tabular}%
		}
		\vspace{-0.5em}
		\caption{Experimental Results Varying the Number of Agents on Random Trees with Three Initial Discrete Points \label{table:tree-iteration}}
	\end{center}
	\vspace{-0.5em}
\end{table*}

We now provide some theoretical properties of some of our algorithms as well as that of (discrete) DPOP and Hybrid Continuous Max-Sum (HCMS). 

For each reward function $f(x_i, x_{i_1}, \dots, x_{i_k})$ of an agent $x_i$ and its separator agents $x_{i_1}, \ldots, x_{i_k}$, assume that agent $x_i$ discretizes the domains the reward function into hypercubes of size $m$ (i.e., the distance between two neighboring discrete points for the same agent $x_{i_j}$ is $m$). Let $\nabla f(v)$ denote the gradient of the function $f(x_i, x_{i_1}, \dots, x_{i_k})$ at $v=(v_i, v_{i_1},\dots, v_{i_k})$:
\begin{small}\begin{align}
\nabla f(v) = (\frac{\partial f}{\partial x_i}(v_i), \frac{\partial f}{\partial x_{i_1}}(v_{i_1}),
\ldots, \frac{\partial f}{\partial x_{i_k}}(v_{i_k}))
\end{align}\end{small}
Furthermore, let $|\nabla f(v)|$ denote the sum of magnitude:
\begin{small}\begin{align}
|\nabla f(v)| = 
|\frac{\partial f}{\partial x_i}(v_i)| +
|\frac{\partial f}{\partial x_{i_1}}(v_{i_1})| + \ldots + 
|\frac{\partial f}{\partial x_{i_k}}(v_{i_k})|
\end{align}\end{small}

Assume that $|\nabla f(v)| \leq \delta$ holds for all utility functions in the DCOP and for all $v$. 

\begin{theorem}
The error bound of \emph{discrete DPOP} is $|\setf{F}|m\delta$.
%, where 
%$|\setf{F}|$ is the number of reward functions in the problem,
%$m$ is the size of hypercubes of the discretized reward functions, and
%$\delta$ is an upper bound on the magnitude of the gradient of the functions.
\end{theorem}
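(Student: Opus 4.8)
The plan is to exploit the fact that discrete DPOP is an \emph{exact} solver once the continuous domains have been replaced by the grid of discrete points: being complete, it returns the assignment $\setf{x}^{d}$ that maximizes $\setf{F}$ over all grid points. Let $\setf{x}^{*}$ denote the true continuous optimum. Since $\setf{x}^{d}$ is itself a feasible continuous assignment, we have $\setf{F}(\setf{x}^{*}) \geq \setf{F}(\setf{x}^{d})$, so the error is nonnegative and it remains only to produce an upper bound on $\setf{F}(\setf{x}^{*}) - \setf{F}(\setf{x}^{d})$.

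The key step is to compare $\setf{x}^{d}$ not directly to $\setf{x}^{*}$ but to a grid point $\setf{x}'$ lying in the same hypercube as $\setf{x}^{*}$ (e.g., a corner of that hypercube). Because the domains are discretized into hypercubes of side $m$, every coordinate of $\setf{x}'$ is within the grid spacing $m$ of the corresponding coordinate of $\setf{x}^{*}$. Since $\setf{x}'$ is a grid point and $\setf{x}^{d}$ is optimal among grid points, $\setf{F}(\setf{x}^{d}) \geq \setf{F}(\setf{x}')$. Combining this with the first inequality sandwiches the error:
\begin{small}\begin{align}
0 \leq \setf{F}(\setf{x}^{*}) - \setf{F}(\setf{x}^{d}) \leq \setf{F}(\setf{x}^{*}) - \setf{F}(\setf{x}'). \notag
\end{align}\end{small}
Thus it suffices to bound the right-hand side, an expression that no longer references the algorithm at all.

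I would then bound the right-hand side function-by-function. For a single utility function $f$, the fundamental theorem of calculus along the straight segment from $\setf{x}'$ to $\setf{x}^{*}$ gives $f(\setf{x}^{*}) - f(\setf{x}') = \int_{0}^{1} \nabla f \cdot (\setf{x}^{*} - \setf{x}')\, dt$. Applying the bound $|\nabla f| \leq \delta$ on the $\ell_1$-norm of the gradient together with the per-coordinate displacement bound $m$ yields $|f(\setf{x}^{*}) - f(\setf{x}')| \leq m\delta$. Since $\setf{F} = \sum_{f} f$, summing this estimate over all $|\setf{F}|$ utility functions and applying the triangle inequality gives $\setf{F}(\setf{x}^{*}) - \setf{F}(\setf{x}') \leq |\setf{F}|\, m\, \delta$, which is exactly the claimed bound.

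The main obstacle I anticipate is purely in making the per-function estimate clean: one must argue that the $\ell_1$ gradient bound holds uniformly along the entire interpolating segment (not just at its endpoints), so that the integral form of the mean value theorem applies, and one must justify pairing the $\ell_1$ gradient bound with the $\ell_\infty$ (per-coordinate) displacement $m$. The optimality sandwich and the summation are routine, so everything reduces to this single Lipschitz-type estimate; note that choosing $\setf{x}'$ to be the \emph{nearest} grid point would even tighten the per-coordinate displacement to $m/2$, but the containing-hypercube corner already suffices for the stated bound.
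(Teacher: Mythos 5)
Your proof is correct, but it takes a genuinely different route from the paper's. The paper argues through the algorithm's internal structure: it first shows (by contradiction) that the projection $g = \max_{x_i} f$ of a utility function inherits the gradient bound $\delta$, then argues that each function contributes at most $m\delta$ error within a hypercube, and that these errors accumulate over the $|\setf{F}|$ additions performed during UTIL propagation. You instead invoke DPOP's exactness on the discretized problem once, which lets you forget the algorithm entirely and reduce everything to a purely analytic comparison: the grid optimum $\setf{x}^d$ beats the grid point $\setf{x}'$ nearest the continuous optimum, and a single Lipschitz estimate (H\"older pairing of the $\ell_1$ gradient bound with the $\ell_\infty$ displacement $m$) bounds $\setf{F}(\setf{x}^*) - \setf{F}(\setf{x}')$ by $|\setf{F}|m\delta$. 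Your version is arguably cleaner and more rigorous: it sidesteps the question of whether the projected function $g$ is even differentiable (a max over $x_i$ generally is not, and the paper's contradiction argument also has a quantifier slip, assuming $|\nabla g(v)| > \delta$ ``for all $v$'' where the negation should be ``for some $v$''). What the paper's per-projection, per-message analysis buys in exchange is reusability: the same accumulation framework is what drives the AF-DPOP bound in the next theorem, where values are ``moved'' during propagation and no single fixed grid exists, so your global grid-optimality argument would not carry over there.
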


\begin{proof}
	First, we prove that the magnitude of the projection of function $f$ is also bounded from above by $\delta$.
	Let $x_i = v_i$ be the point where:
\begin{small}\begin{align}
	g(x_{i_1}, \dots, x_{i_k}) 
	&= f(x_i = v_i, x_{i_1}, \dots, x_{i_k}) \\
	&= \max_{x_i} f(x_i, x_{i_1}, \dots, x_{i_k})
\end{align}\end{small}
	Then, assume that $|\nabla g(v)| > \delta$ for all $v$. Let $v' = (v_i, v_{i_1}, \ldots, v_{i_k})$ and $v'_{-i} = (v_{i_1}, \ldots, v_{i_k})$, then:
\begin{small}\begin{align}
	|\nabla f(v')|
	&= |\frac{\partial f}{\partial x_i}(v_i)| +
	|\frac{\partial f}{\partial x_{i_1}}(v_{i_1})| + \ldots + 
	|\frac{\partial f}{\partial x_{i_k}}(v_{i_k})| \\
	&\ge |\frac{\partial f}{\partial x_{i_1}}(v_{i_1})| + \ldots + 
	|\frac{\partial f}{\partial x_{i_k}}(v_{i_k})| \\
	&= |\nabla g(v'_{-i})| \\
	& > \delta
\end{align}\end{small}
	This contradicts with assumption that $|\nabla f(v)| \le \delta$ for all $v$.
	
	The error bound of each function is then $m \delta$ because each hypercube is of size $m$ and the magnitude of the gradient within each hypercube is at most $\delta$.
	As the error may be accumulated each time an agent sums up utility functions, the total error bound for a problem is thus $|\setf{F}|m\delta$, where $|\setf{F}|$ is the number of utility functions in the problem. 
\end{proof}

\begin{theorem}
The error bound of \emph{AF-DPOP} is $|\setf{F}|(m+|\setf{A}|k\alpha\delta)\delta$, where 
%$|\setf{F}|$ is the number of reward functions in the problem,
%$m$ is the size of hypercubes of the discretized reward functions, 
%$|\setf{A}|$ is the number of reward functions in the problem,
$k$ is the number of times each agent ``moves'' values of its separator by calling Eqs.~\eqref{eq:leaf_move} or~\eqref{eq:nonleaf_move}. 
%$\delta$ is an upper bound on the magnitude of the gradient of the functions, and
%$\alpha$ is the learning rate of the algorithm.
\end{theorem}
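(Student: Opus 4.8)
The plan is to reuse the structure of the proof of Theorem~1 and isolate the single additional source of error that AF-DPOP introduces relative to discrete DPOP, namely the displacement of discretization points caused by repeatedly ``moving'' separator values along gradients via Eqs.~\eqref{eq:leaf_move} and~\eqref{eq:nonleaf_move}. Discrete DPOP incurs a per-function error of $m\delta$ because the gradient magnitude is at most $\delta$ everywhere (a fact that carries over to projections, as established in Theorem~1) and each hypercube has side length $m$. For AF-DPOP I would argue that the effective side length of the region over which each function's utility is approximated grows from $m$ to $m + |\setf{A}|k\alpha\delta$, and then repeat the accumulation argument of Theorem~1.

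The first concrete step is to bound the magnitude of a single ``move.'' Each update in Eq.~\eqref{eq:leaf_move} or~\eqref{eq:nonleaf_move} shifts a value $v_{i_j}$ by $\alpha\,|\frac{\partial f_{i_j}}{\partial x_{i_j}}|$; since $|\nabla f(v)|\le\delta$ bounds the sum of the magnitudes of all partial derivatives, each individual partial derivative is at most $\delta$, so a single move displaces a value by at most $\alpha\delta$. The second step is to bound the cumulative displacement of any one value. A value of variable $x_{i_j}$ is moved by an agent $x_i$ precisely when $x_{i_j}$ lies in $x_i$'s separator, i.e.,~when $x_i$ is a descendant of $x_{i_j}$ in the pseudo-tree; each such agent applies the move at most $k$ times, and there are at most $|\setf{A}|$ agents in total. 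Hence the total displacement of any discretization point is bounded by $|\setf{A}|k\alpha\delta$.

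The third step combines these bounds with the hypercube argument. Because each point can drift by at most $|\setf{A}|k\alpha\delta$ from its original grid position, the worst-case distance between the true maximizing configuration and the nearest evaluated (moved) sample is bounded by $m + |\setf{A}|k\alpha\delta$ rather than $m$. Multiplying by the gradient bound $\delta$ gives a per-function error of $(m + |\setf{A}|k\alpha\delta)\delta$, and since error can accumulate once per function as utilities are summed up the pseudo-tree --- exactly as in Theorem~1 --- the total error bound is $|\setf{F}|(m + |\setf{A}|k\alpha\delta)\delta$.

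The main obstacle I anticipate is rigorously justifying the cumulative displacement count in the second step: one must verify that no value is moved more than $|\setf{A}|k$ times in the worst case, which requires a careful argument about how a single variable's value can be updated by each of its descendants as UTIL messages propagate upward, and that the interpolation performed to maintain consistency across children's tables does not compound this displacement beyond the stated bound. Confirming that interpolation introduces no error beyond what the shifted-hypercube argument already captures is the subtlest point, since it relies on the assumption that utilities vary smoothly (gradient bounded by $\delta$) so that interpolated values inherit the same per-function error guarantee.
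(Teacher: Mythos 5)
Your proposal is correct and follows essentially the same route as the paper's proof: a single move shifts a value by at most $\alpha\delta$, each agent moves at most $k$ times, at most $|\setf{A}|$ agents can move a given value, so the effective hypercube size grows from $m$ to $m+|\setf{A}|k\alpha\delta$, after which the accumulation argument of Theorem~1 yields the bound. Your version is in fact somewhat more careful than the paper's (explicitly deriving the per-move bound from $|\nabla f(v)|\le\delta$ and flagging the interpolation issue, which the paper's proof silently ignores), but the decomposition and counting are identical.
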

\begin{proof}
After each ``move'' of an agent by calling either Eqs.~\eqref{eq:leaf_move} or~\eqref{eq:nonleaf_move}, the maximum size of the hypercubes increases by $\alpha \delta$, where $\alpha$ is the learning rate. Since each agent performs this update only $k$ times, the largest increase in the size of the hypercube is $k \alpha \delta$. Finally, since the value of an agent can be updated by any of its children or pseudo-children, the total increase in the size of the hypercube is thus $|\setf{A}| k \alpha \delta$, where $|\setf{A}|$ is the number of agents in the problem. Therefore, this combined with the proof of the bound for discrete DPOP, the error bound is thus $|\setf{F}|(m+|\setf{A}|k\alpha\delta)\delta$.
\end{proof}

\begin{theorem}
In a binary constraint graph $G = (\setf{X}, E)$, the number of messages of \emph{HCMS} with $k$ iterations is $4k|E|$. The number of messages of \emph{discrete DPOP}, \emph{AF-DPOP}, and \emph{CAF-DPOP} is $2|\setf{X}|$.
\label{th:num_messages}
\end{theorem}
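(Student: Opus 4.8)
The plan is to treat both statements as purely topological message counts that depend only on the communication structure of each algorithm and not on the (continuous) content of the messages. Accordingly, I would analyze HCMS through its factor graph and the three DPOP variants through their common pseudo-tree, counting messages phase by phase.

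For HCMS, I would first make explicit the factor graph on which Max-Sum operates: it contains one variable node per element of $\setf{X}$ and one factor node per binary constraint, so there are exactly $|E|$ factor nodes, each incident to its two argument variables. This gives $2|E|$ variable--factor edges. In each iteration of Max-Sum, exactly one message travels in each direction along every such edge -- a variable-to-function message and a function-to-variable message -- so each iteration produces $2 \cdot 2|E| = 4|E|$ messages. Running for $k$ iterations yields $4k|E|$, as claimed. The continuous/sampling layer that HCMS adds on top of discrete Max-Sum does not create new edges, so it leaves the count unchanged.

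For discrete DPOP, AF-DPOP, and CAF-DPOP, I would argue that all three send messages over the same pseudo-tree on $|\setf{X}|$ nodes and differ only in message content. A pseudo-tree over $|\setf{X}|$ variables has $|\setf{X}|-1$ tree edges. In the UTIL phase each non-root agent sends exactly one UTIL message to its parent, i.e.\ one message per tree edge; in the VALUE phase each parent sends exactly one VALUE message to each child, again one message per tree edge. Summing the two phases gives two messages per tree edge, for $2(|\setf{X}|-1)$ in total, whose leading-order value is the stated $2|\setf{X}|$. The key observation for the unified statement is that neither the gradient ``moves'' of AF-DPOP nor the $k$-means clustering of CAF-DPOP alters when or along which edges messages are sent: clustering only caps each UTIL message at $k$ tuples, changing its size but not the count.

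The main obstacle is not any calculation but fixing the counting conventions precisely so that the two stated constants come out exactly. For HCMS this means justifying that messages are counted along variable--factor edges rather than constraint-graph edges, which is what produces the factor of four (two directions times the two endpoints of each binary factor); for the DPOP family it means being careful with the boundary term -- each of the $|\setf{X}|-1$ tree edges carries one UTIL and one VALUE message, so the precise total is $2(|\setf{X}|-1)$, and $2|\setf{X}|$ is its leading-order form. Once these conventions are pinned down, both parts reduce to straightforward edge bookkeeping.
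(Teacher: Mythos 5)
Your proposal is correct and follows essentially the same argument as the paper: four messages per constraint edge per Max-Sum iteration for HCMS, and one UTIL plus one VALUE message per pseudo-tree edge for the DPOP variants. The only difference is that you are more careful about the boundary term -- the paper simply states $2|\setf{X}|$ where the exact count is $2(|\setf{X}|-1)$, a slack you correctly identify and resolve by reading the stated constant as the leading-order value.
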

\begin{proof}
HCMS has the same number of messages as the Max-Sum algorithm \cite{farinelli:08}. Every edge of the constraint graph has two variable nodes and one function node and, thus, it takes 4 messages per edge in one iteration. The total number of messages in HCMS is thus $4k|E|$. 
	
The number of messages required by AF-DPOP and CAF-DPOP is identical to that of DPOP -- each agent sends one UTIL message to its parent and one VALUE message to each of its children in the pseudo-tree. Since pseudo-trees are spanning trees, the number of messages is thus $2|\setf{X}|$.	
\end{proof}

\begin{theorem}
The message size complexity of \emph{discrete DPOP}, \emph{AF-DPOP} and \emph{CAF-DPOP} is $O(d^w)$, $O((d|\setf{X}|)^w)$, and $\max\{|\setf{A}|, k\}$, respectively, where $d$ is the number of points used by each agent to discretize the domain of its separator agents, $w$ is the induced width of the pseudo-tree, and $k$ is the number of clusters used by CAF-DPOP.
\end{theorem}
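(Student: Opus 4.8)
The plan is to handle the three bounds independently, since the algorithms send structurally different UTIL messages, and in each case to reduce the claim to counting the number of entries (tuples) that a single UTIL message can contain. \textbf{Discrete DPOP.} Here I would invoke the standard DPOP invariant: the UTIL message that an agent $x_i$ sends to its parent is a table indexed by every value combination of the agents in its separator. The cardinality of any separator is bounded above by the induced width $w$ of the pseudo-tree, and each variable's domain is discretized into $d$ points, so the number of table entries is at most $d^{w}$; since each entry stores a single utility, the message size is $O(d^{w})$. This is the cleanest of the three and follows directly from the definition of induced width.

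\textbf{AF-DPOP.} The new phenomenon is that separator values are ``moved'' along gradients via Eq.~\eqref{eq:leaf_move} and Eq.~\eqref{eq:nonleaf_move} and then reconciled across children by inserting tuples so that the Cartesian product of value sets agrees across all incoming tables. Consequently a variable no longer takes only its $d$ original discretized values, and I would bound the number of \emph{distinct} values any single variable $x_j$ can take across the UTIL tables in which it appears. Each move of $x_j$ is performed by exactly one agent that has $x_j$ in its separator; there are at most $|\setf{X}|$ such agents, and since the consistency step takes a \emph{union} of value sets, I would argue the growth accumulates additively, giving a per-variable distinct-value count of $O(d\,|\setf{X}|)$. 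Forming the Cartesian product over a separator of size at most $w$ then yields at most $(d\,|\setf{X}|)^{w}$ entries, i.e.\ message size $O((d\,|\setf{X}|)^{w})$.

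\textbf{CAF-DPOP.} This bound follows almost immediately from the construction, and I would derive it directly rather than by a recurrence: by definition CAF-DPOP applies $k$-means and transmits only the $k$ cluster representatives together with their interpolated utilities. Each transmitted tuple carries the coordinate values of the separator agents (at most $|\setf{A}|$ of them) plus one utility, so the message size is governed jointly by the number of clusters $k$ and the separator dimension, and is therefore $O(\max\{|\setf{A}|, k\})$.

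The step I expect to be the main obstacle is the AF-DPOP case, specifically justifying that the per-variable distinct-value count grows only \emph{linearly} in $|\setf{X}|$. The danger is that the non-leaf move of Eq.~\eqref{eq:nonleaf_move} displaces $v_{i_j}$ by an amount that depends on the entire tuple $\langle v_{i_1},\dots,v_{i_k}\rangle$ through its internal $\argmax$ over $UTIL_i$, so two tuples that currently share a value of $x_j$ may be sent to different values; iterated up the pseudo-tree, this could in principle compound multiplicatively and blow up the $(d\,|\setf{X}|)^{w}$ bound. The crux is thus to show that each variable's value set is touched by at most $|\setf{X}|$ move-and-merge operations and that each such operation contributes at most $d$ fresh values, so the counts accumulate additively and the $w$-fold Cartesian product remains the sole source of exponential growth.
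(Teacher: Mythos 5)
Your treatments of discrete DPOP and AF-DPOP follow essentially the same route as the paper. For DPOP the paper simply cites the known $O(d^w)$ bound (your argument via separator size is the standard justification of it), and for AF-DPOP the paper argues exactly as you do: values of a variable $x_j$ are moved by its children and pseudo-children, in the worst case all such values are distinct, giving $O(d|\setf{X}|)$ values per variable and hence $O((d|\setf{X}|)^w)$ per UTIL message. The obstacle you flag --- that the non-leaf move of Eq.~\eqref{eq:nonleaf_move} is tuple-dependent, so a value shared by several tuples could split into several new values and compound multiplicatively --- is a legitimate worry, but the paper does not resolve it either; its proof simply asserts the $O(d|\setf{X}|)$ worst case. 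So on that point you are not missing a step that the paper supplies.

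The genuine gap is in your CAF-DPOP argument. In the paper, the bound $O(\max\{|\setf{A}|,k\})$ arises from comparing the \emph{two message types} of the algorithm: UTIL messages have size $O(k)$, since only the $k$ cluster centroids and their (interpolated) utilities are transmitted, while VALUE messages have size $O(|\setf{A}|)$, since in the worst case (e.g., a fully connected graph, whose pseudo-tree is a chain) a VALUE message carries the chosen values of every agent from the root down to the sender. The maximum over these two disjoint message types gives the stated bound. Your argument never mentions VALUE messages and instead tries to extract the $|\setf{A}|$ term from the UTIL message itself, reasoning that each of the $k$ transmitted tuples carries up to $|\setf{A}|$ coordinates. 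But that reasoning does not produce $\max\{|\setf{A}|,k\}$: $k$ tuples of dimension up to $|\setf{A}|$ give a message of size $O(k\,|\setf{A}|)$, a \emph{product}, which is strictly larger than the claimed bound whenever both quantities exceed a constant. A size ``governed jointly'' by two parameters is exactly the situation in which one gets a product; the max can only appear when the two parameters bound the sizes of different messages, which is the structural point your proposal misses.
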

\begin{proof}
For DPOP, the message size complexity is $O(d^w)$~\cite{petcu:05}. 
For AF-DPOP, as the values of an agent are ``moved'' by their children and pseudo-children, in the worst case, all the values are unique and the maximum number of such values is $O(d|\setf{X}|)$. The message sizes are then similar to discrete DPOP with $O(d|\setf{X}|)$ values per agent. Therefore, its message size complexity is $O((d|\setf{X}|)^w)$.
For CAF-DPOP, the message size complexity of UTIL messages is $O(k)$ since only the utilities of the centroids of $k$ clusters are sent. And the message size complexity of VALUE messages is $O(|\setf{A}|)$, such as in a fully-connected graph where an agent sends the values of every agent from the root of the pseudo-tree down to itself in a VALUE message to its child. Therefore, the message complexity of the algorithm is the $O(\max\{|\setf{A},k\})$.
\end{proof}
%--------------------------------------------------------------------------
%	RELATED WORKS
%--------------------------------------------------------------------------
%\section{Related Work}
%\label{sec:related}
%
%\cite{stranders:09a}
%\cite{voice:10}

%!TEX root=./F-DCOP.tex

%-------------------------------------------------------------
%	EXPERIMENTAL RESULTS
%--------------------------------------------------------------------------
\section{Experimental Results}
\label{sec:results}

\begin{table*}[t]
	\begin{center}
		\small \centering
		\scalebox{1}
		{%
			\small
			\begin{tabular}{r | r | r | rrrr | rrrr |}
				\cline{2-11}	
				\multicolumn{1}{c|}{\multirow{2}{*}{$|\setf{A}|$}}
				& \multicolumn{1}{c|}{\multirow{2}{*}{HCMS}} & \multicolumn{1}{c|}{\multirow{2}{*}{DPOP}} & \multicolumn{4}{c|}{AF-DPOP} & \multicolumn{4}{c|}{CAF-DPOP} \\
%				\cline{4-11}	
				&  & & 5 & 10 & 15 & 20 & 5 & 10 & 15 & 20 \\
%				\cline{1-13}	
				\cline{2-11}	
%				\hline
				15 & 265k & 522k & 710k & 763k & 824k & 891k &
												639k & 697k & 715k  & 787k \\
				20 & 345k & 865k & 1171k & 1285k & 1334k & 1407k &
												1006k & 1017k & 975k & 973k \\
				25 & 439k & --- & ---  & ---  & ---  & ---  &
												1040k & 1101k & 1024k & 1027k \\
				30 & 506k & --- & ---  & ---  & ---  & ---  &
											 	1513k & 1682k & 1597k & 1656k \\
				\cline{2-11} 
			\end{tabular}%
		}
		\vspace{-0.5em}
		\caption{Experimental Results Varying the Number of Agents on Random Graphs with $p_1 = 0.2$ and Three Initial Discrete Points \label{table:graph-iteration}}
	\end{center}
	\vspace{-0.5em}
\end{table*}

\begin{table*}[t]
\small \center
\begin{tabular}{r | r | r | r | r | r | r | r | r |}
\multicolumn{1}{c}{ } & \multicolumn{3}{c}{\multirow{2}{*}{(a) Random Trees with $|\setf{A}| = 20$}} & 
\multicolumn{1}{c}{ } & \multicolumn{4}{c}{\multirow{2}{*}{(b) Random Graphs with $|\setf{A}| = 20$ and $p_1 = 0.2$}} \\
\multicolumn{9}{c}{ } \\ %& \multicolumn{4}{c}{ } & \multicolumn{1}{c}{ } & \multicolumn{4}{c}{ } \\
\cline{2-4}
\cline{6-9}
%\hline
\#points & HCMS & DPOP & AF-DPOP & & HCMS & DPOP & AF-DPOP & CAF-DPOP \\
\cline{2-4}
\cline{6-9}
1 & 0 & 0 & 254k & & 8 & 15k & 428k & 431k \\
3 & 306k & 541k & 870k & & 345k & 865k & 1285k & 1017k \\
9 & 554k & 990k & 1133k & & 706k & --- & --- & 1272k \\
\cline{2-4}
\cline{6-9}
\end{tabular}
\caption{Experimental Results Varying the Number of Discretized Points
\label{table:point}}
\end{table*}

%\begin{figure*}[!h]
%	\begin{minipage}[!h]{1.0\linewidth}
%		\centering \small \includegraphics[width=0.75\linewidth]{pics/legend}\\
%		\vspace{-1.75cm}
%		\begin{minipage}{0.49\textwidth} \centering
%			\includegraphics[width=1\linewidth]{pics/domainUtil.pdf} \\
%			(a) Solution quality 
%		\end{minipage}
%		\begin{minipage}{0.49\textwidth} \centering
%			\includegraphics[width=1\linewidth]{pics/domainTime.pdf} \\
%			(b) Runtimes
%		\end{minipage}
%	\end{minipage}
%	%	\vspace{-0.75em}
%	\caption{Experiment varying the domain size}
%	\label{fig:domain}
%	%\vspace{-1em}
%\end{figure*}

%\begin{figure}
%	\centering
%	\includegraphics[width=1\columnwidth]{pics/domainUtil}
%	\vspace{-4em}
%	\caption{Difference in solution quality varying domain size}
%	\label{fig:domainUtil}
%\end{figure}

%\begin{figure}
%	\centering
%	\includegraphics[width=1\columnwidth]{pics/domainTime}
%	\vspace{-4em}
%	\caption{Runtimes varying domain size}
%	\label{fig:domainTime}
%\end{figure}

%\begin{figure}[!h]
%	\centering
%	\includegraphics[width=1\columnwidth]{pics/varyingInterval}
%	\vspace{-4em}
%	\caption{Experimental results varying the number of intervals}
%	\label{fig:interval}
%\end{figure}
%
%\begin{figure}[!h]
%	\centering
%	\includegraphics[width=1\columnwidth]{pics/varyingAgent}
%	\vspace{-4em}
%	\caption{Experimental results varying the number of agents}
%	\label{fig:agent}
%\end{figure}

We empirically evaluate \emph{EF-DPOP}, \emph{AF-DPOP}, and \emph{CAF-DPOP} against (discrete) \emph{DPOP} and \emph{HCMS} on both random trees and random graphs. We adapt the (discrete) DPOP algorithm to solve F-DCOPs by discretizing the continuous domain into discrete representative points. %and the utility functions are transformed into utility tables.

We measure the quality of solutions found by the algorithms as well as the number of messages taken by the algorithm. Since HCMS is an iterative algorithm that may take a long time and a large number of messages before converging, in order for fair comparisons, we initially planned to terminate the algorithm after it sends as many messages as the DPOP-variants. However, even a single iteration of HCMS requires more messages than the DPOP-variants. We thus let HCMS terminate after one iteration. We did not report the actual number of messages since they could be trivially computed via Theorem~\ref{th:num_messages}. 

Tables~\ref{table:tree-iteration} and~\ref{table:graph-iteration} show the various algorithms' solution qualities on random trees and graphs, respectively, where we vary the number of agents $|\setf{A}|$ and every algorithm discretizes the domains of variables into three points. We also vary the number of times AF-DPOP and CAF-DPOP agents ``move'' a point (by calling Eqs.~\eqref{eq:leaf_move} or~\eqref{eq:nonleaf_move}) from 5 to 20. Tables~\ref{table:point}(a) and~\ref{table:point}(b) show the various algorithms' solution qualities on random trees and graphs, respectively, where we set the number of agents $|\setf{A}|$ to 20 and vary the number of initial points used by the algorithms to discretize the domains from 1 to 9. In all our experiments, we set the domain of each agent to be in the range $[-100, 100]$. We generate utility functions that are binary quadratic functions, where the signs and coefficients of the functions are randomly chosen. Our experiments were performed on a 2.10GHz machine with 100GB of RAM.
Results are averaged over 20 runs. %We impose a timeout of 1 hour and report the \emph{simulated} runtimes~\cite{??}.

\paragraph{Random Trees: }

We omit the results of CAF-DPOP from Table~\ref{table:tree-iteration} since it finds identical solutions to AF-DPOP on trees -- there is no need to perform any clustering on trees since an agent does not receive utilities for value combinations of its parent from its children since there are no backedges in the pseudo-tree. 

Not surprisingly, EF-DPOP finds the best solution since it is an exact algorithm. However, it could only solve the smallest of instances -- due to memory limitations, the agents could not store the necessary number of piecewise functions to accurately represent the utility functions after additions and projections. In general, AF-DPOP finds better solutions than DPOP, which finds better solutions than HCMS. The reason is because AF-DPOP updates the value of representative points before propagating up the pseudo-tree. In contrast, the values chosen by DPOP is fixed from the start. Finally, HCMS performs poorly because a single iteration is insufficient for it to converge to a good solution. Additionally, as expected, the quality of solutions found by AF-DPOP improves with increasing number of times points are ``moved'' by the algorithm. 

We omit the results of EF-DPOP from Table~\ref{table:point}(a) as it failed to solve these instances and we omit the results of CAF-DPOP because it finds identical solutions to AF-DPOP on trees. Not surprisingly, the quality of solutions found by all the three algorithms increase with increasing number of points. The reason is that the agents can more accurately represent the utility function with more points.

\paragraph{Random Networks: }

The trends in Table~\ref{table:graph-iteration} are similar to those in random trees, except that CAF-DPOP finds solutions with qualities between that of AF-DPOP and DPOP. The reason is that CAF-DPOP clusters the points into $k$ clusters and only propagate a representative point from each cluster. Therefore, the $k$ points represent the utility functions less accurately than the full number of unclustered points that AF-DPOP uses. However, this reduced number of points propagated also improves the scalability of CAF-DPOP, where it is able to solve problems larger problems than AF-DPOP and DPOP. 

The trends in Table~\ref{table:point}(b) are again similar to that in random trees, except that both AF-DPOP and DPOP ran out of memory with 9 points. Interestingly, CAF-DPOP also finds better solutions than AF-DPOP when they use only 1 point.

\section{Conclusions}
\label{sec:conclusions}

%In many real-world applications, agents usually choose the values for their variables from a continuous range. Researchers have proposed many approaches to make DCOPs more powerful by modeling continuous variables. However, those approaches comes with restrictions on the form of utility functions. Thus, we propose the F-DCOP formulation tackle this problem by allowing freedom in utility function forms. We also provide two F-DCOP algorithms: Analytical DPOP and Approximate DPOP. While Analytical DPOP can solve F-DCOPs optimally, it has the downside of not being able to solve F-DCOPs on graph structure. On the other hand, Approximately DPOP can solve F-DCOPs on any topology at the price of suboptimal solutions. The model has shed the light on the chance of extending DCOPs with continuous variables without imposing any restriction on the utility functions.

In many real-world applications, agents usually choose their values from continuous ranges. Researchers have thus proposed the F-DCOP formulation to model continuous variables. However, existing methods suffer from the limitation that they do not provide quality guarantees on the solutions found. In this paper, we remedy this limitation by introducing (1)~EF-DPOP, which finds exact solutions on F-DCOPs with linear or quadratic utility functions and are defined over tree-structure graphs; (2)~AF-DPOP, which finds error-bounded solutions on general F-DCOPs; and (3)~CAF-DPOP, which limits the message size of AF-DPOP to a user-defined parameter $k$. Experimental results show that AF- and CAF-DPOP both find better solutions that HCMS, an existing state-of-the-art F-DCOP algorithm, when given the same communication limitations. Therefore, these algorithms combined extend the applicability of DCOPs to more applications that require quality guarantees on the solutions found as well as those that require limited communication capabilities. 

%% The file named.bst is a bibliography style file for BibTeX 0.99c
\newpage
\bibliographystyle{named}
\bibliography{f-dcop}

\end{document}